\theoremstyle{plain}
\numberwithin{equation}{section}
\newtheorem{thm}{Theorem}[section]
\theoremstyle{definition}  %% 10.20 from CMJ
\newtheorem{exam}{Example}  
\newcommand{\complex}{{\mathbb C}}
\newcommand{\Natural}{{\mathbb N}}
\newcommand{\rmtr}{\mathrm{tr\,}}
\newcommand{\escript}{\mathcal{E}}
\newcommand{\lscript}{\mathcal{L}}
\newcommand{\pscript}{\mathcal{P}}
\newcommand{\sscript}{\mathcal{S}}
\newcommand{\ab}[1]{\left|#1\right|}
\newcommand{\doubleab}[1]{\left|\left|#1\right|\right|}
\newcommand{\brac}[1]{\left\{#1\right\}}
\newcommand{\paren}[1]{\left(#1\right)}
\newcommand{\sqbrac}[1]{\left[#1\right]}
\newcommand{\parensq}[1]{{\left(#1\right]}} % \left (and \right ] cause system to look for \left ( and \right ]
\newcommand{\elbows}[1]{{\left\langle#1\right\rangle}}
\newcommand{\bbar}{\overline{b}}
\begin{document}

\title{TIME EVOLUTION OF\\QUANTUM EFFECTS}
\author{Stan Gudder\\ Department of Mathematics\\
University of Denver\\ Denver, Colorado 80208\\
sgudder@du.edu}
\date{}
\maketitle

\begin{abstract}
For quantum effects $a$ and $b$ we define the $a$-evolution of $b$ at time $t$ denoted by $b(t\mid a)$. We interpret $b(t\mid a)$ as the influence that $a$ has on $b$ at time $t$ when $a$ occurs, but is not measured at time $t=0$. Using $b(t\mid a)$ we define the time-dependent sequential product $a[t]b$. This is interpreted as an effect that results from first measuring $a$ and then measuring $b$ after a time delay $t$. Various properties of $a[t]b$ are derived and it is shown that $a[t]b$ is constant in time if and only if $a$ and $b$ commute or $a$ is a multiple of a projection. These concepts are extended to observables for a quantum system. The ideas are illustrated with some examples.
\end{abstract}

\section{Basic Definitions}  % Section 1
We consider quantum systems represented by a finite-dimensional complex Hilbert space $H$ and denote the set of linear operators on $H$ by
$\lscript (H)$. For $A,B\in\lscript (H)$ we write $A\le B$ if $\elbows{\psi ,A\psi}\le\elbows{\psi ,B\psi}$ for all $\psi\in H$. We call $a\in\lscript (H)$ an \textit{effect} if $0\le a\le I$ where $0,I$ are the zero and identity operators, respectively \cite{blm96,hz12,nc00}. Effects represent two outcome measurements and are frequently called yes-no experiments. We denote the set of effects on $H$ by $\escript (H)$. If $a\in\escript (H)$ is measured and the result is yes, we say that $a$ \textit{occurs}. An element $\rho\in\escript (H)$ with trace $\rmtr(\rho)=1$ is called a
\textit{state} and the set of states on $H$ is denoted by $\sscript (H)$. The \textit{probability that} $a$ \textit{occurs} when the system is in the state $\rho$ is defined by $\pscript _\rho (a)=\rmtr (\rho a)$. For $a,b\in\escript (H)$ we define the \textit{sequential product} $a\circ b=a^{1/2}ba^{1/2}$ where $a^{1/2}$ is the unique positive square-root of $a$. It is easy to check that $a\circ b\in\escript (H)$ and it can be shown that
$a\circ b=b\circ a$ if and only if $a$ and $b$ commute $(ab=ba)$ \cite{gg02,gl08}. We interpret $a\circ b$ as the effect that results from first measuring $a$ and then measuring $b$ directly afterwards. Since $a$ is measured first, its measurement can interfere with the measurement of $b$ but not vice-versa \cite{gg02,gl08,jj09,wj18,wet18}.

An \textit{observable} for a system described by $H$ is a finite set of effects $A=\brac{A_x\colon x\in\Omega _A}\subseteq\escript (H)$ satisfying $\sum _{x\in\Omega _A}A_x=I$. We call $\Omega _A$ the \textit{outcome set} for $A$ and the effect $A_x$, $x\in\Omega _A$, occurs if $A$ has outcome $x$ when measured. The \textit{distribution} for $A$ in the state $\rho\in\sscript (H)$ is the probability measure
$\Phi _\rho ^A(x)=\rmtr (\rho A_x)$. Notice that $\Phi _\rho ^A$ gives a probability measure because
\begin{equation*}
\sum _{x\in\Omega _A}\Phi _\rho ^A(x)=\rmtr\paren{\rho\sum _{x\in A}A_x}=\rmtr (\rho )=1
\end{equation*}
If $A=\brac{A_x\colon x\in\Omega _A}$ and $B=\brac{B_y\colon y\in\Omega _B}$ are observables we define the \textit{sequential product}
$A\circ B=\brac{A_x\circ B_y\colon (x,y)\in\Omega _A\times\Omega _B}$. Thus, the outcome set for $A\circ B$ is\
$\Omega _{A\circ B}=\Omega _A\times\Omega _B$ and corresponding to outcome $(x,y)$ we have the effect $A\circ B_{(x,y)}=A_x\circ B_y$. Notice that $A\circ B$ is indeed an observable because
\begin{align*}
\sum\brac{A_x\circ B_y\colon (x,y)\in\Omega _A\times\Omega _B}&=\sum _{x\in\Omega _A}A_x\circ\sum _{y\in\Omega _B}B_y
  =\sum _{x\in\Omega _A}A_x\circ I\\
  &=\sum _{x\in\Omega _A}A_x=I
\end{align*}
We also define the observable $B$ \textit{conditioned by} $A$ to be the set of effects $(B\mid A)_y=\sum\limits _{x\in\Omega _A}A_x\circ B_y$ with outcome set $\Omega _B$. Again, $(B\mid A)$ is an observable because
\begin{equation*}
\sum _{y\in\Omega _B}(B\mid A)_y=\sum _{x\in\Omega _A}A_x\circ\sum _{y\in\Omega _B}B_y=I
\end{equation*}

Two effects $a,b\in\escript (H)$ \textit{coexist} \cite{blm96,hz12} if there exist effects $a_1,b_1,c\in\escript (H)$ such that $a_1+b_1+c\le I$ and
$a=a_1+c$, $b=b_1+c$. If $a,b$ coexist we can define $d\in\escript (H)$ by $d=I-a_1-b_1-c$. Then $A=\brac{a_1,b_1,c,d}$ is an observable and we can view $a$ and $b$ as corresponding to outcomes of $A$. In this way, we can simultaneously measure $a$ and $b$ using the single observable $A$.

\section{Time Evolutions}  % Section 2
For $a\in\escript (H)$, $\rho\in\sscript (H)$ if $a$ occurs and a time $t$ elapses, we interpret $e^{ita}\rho e^{-ita}$ as the resulting state at time
$t$. A way of viewing this is that $\rho\mapsto e^{ita}\rho e^{-ita}$ is a unitary $a$-\textit{channel} \cite{hz12,nc00} at time $t$ and
$e^{ita}\rho e^{-ita}$ results from sending $\rho$ through this channel. After $\rho$ is sent through this $a$-channel, the probability that an effect
$b\in\escript (H)$ occurs in the resulting state becomes:
\begin{equation*}
\rmtr (e^{ita}\rho e^{-ita}b)=\rmtr (\rho e^{-ita}be^{ita})
\end{equation*}
We call $b(t\mid a)=e^{-ita}be^{ita}$, $t\in(-\infty ,\infty )$ the $a$-\textit{evolution} of $b$. We interpret $b(t\mid a)$ as the influence that $a$ has on $b$ at time $t$ when $a$ occurs at time $t$. Notice that $b(t\mid a)\in\escript (H)$ for all $t\in (-\infty ,\infty )$. The \textit{rate of change} of
$b(t\mid a)$ becomes:
\begin{align}             % equation (2.1)
\label{eq21}
\tfrac{d}{dt}\,b(t\mid a)&=-iae^{-ita}be^{ita}+e^{-ita}b(ia)e^{ita}\notag\\
   &ie^{-ita}\sqbrac{b,a}e^{ita}=i\sqbrac{b(t\mid a),a}
\end{align}
where $\sqbrac{b,a}=ba-ab$ is the \textit{commutant} of $b$ with $a$. It follows that $b(t\mid a)=b$ for all $t\in (-\infty ,\infty )$ if and only if
$\sqbrac{b,a}=0$. Although $i\sqbrac{b(t\mid a),a}$ is self-adjoint, it need not be positive. That is, $i\sqbrac{b(t\mid a),a}\not\ge 0$ in general. This is illustrated in the following example.

\begin{exam}{1}  % Example 1
Let $a=\begin{bmatrix}1&0\\0&1/2\end{bmatrix},b=\tfrac{1}{2}\begin{bmatrix}1&1\\1&1\end{bmatrix}\in\escript (\complex ^2)$ be qubit effects. We have that
\begin{align*}
b(t\mid a)&=e^{-ita}be^{ita}=\frac{1}{2}\,\begin{bmatrix}e^{-it}&0\\0&e^{-it/2}\end{bmatrix}\begin{bmatrix}1&1\\1&1\end{bmatrix}
    \begin{bmatrix}e^{it}&0\\0&e^{it/2}\end{bmatrix}\\\noalign{\smallskip}
    &=\frac{1}{2}\,\begin{bmatrix}1&e^{-it/2}\\e^{it/2}&1\end{bmatrix}
\end{align*}
and
\begin{align*}
i\sqbrac{b(t\mid a),a}&=\frac{i}{2}\,
  \brac{\begin{bmatrix}1&e^{-it/2}\\e^{it/2}&1\end{bmatrix}\begin{bmatrix}1&0\\0&1/2\end{bmatrix}-\begin{bmatrix}1&0\\0&1/2\end{bmatrix}%
  \begin{bmatrix}1&e^{-it/2}\\e^{it/2}&1\end{bmatrix}}\\\noalign{\smallskip}
  &=\frac{i}{2}\,
  \brac{\begin{bmatrix}1&\tfrac{1}{2}\,e^{-it/2}\\e^{it/2}&1/2\end{bmatrix}-\begin{bmatrix}1&e^{-it/2}\\\tfrac{1}{2}\,e^{it/2}&1/2\end{bmatrix}}\\
  &=\frac{i}{4}\,\begin{bmatrix}0&-e^{-it/2}\\e^{it/2}&0\end{bmatrix}
\end{align*}
Then the eigenvalues of $i\sqbrac{b(t\mid a),a}$ are $\pm 1/4$ so $i\sqbrac{b(t\mid a),a}$ is not positive.\hfill\qedsymbol
\end{exam}

It follows from \eqref{eq21} that 
\begin{equation*}
\frac{d^2}{dt^2}\,b(t\mid a)=\frac{d}{dt}\,\frac{db}{dt}\,(t\mid a)=-\sqbrac{\sqbrac{b(t\mid a),a},a}
\end{equation*}
and continuing we obtain
\begin{equation*}
\frac{d^n}{dt^n}b(t\mid a)=i^{(n)}\sqbrac{\sqbrac{\sqbrac{b(t\mid a),a},a}\cdots ,a}
\end{equation*}
\smallskip

\begin{exam}  % Example 2
The simplest nontrivial example if an $a$-evolution is when $a=\lambda p$ where $p\ne 0$, I is a projection and $\lambda\in\parensq{0,1}$. We then have that
\begin{align}             % equation (2.2)
\label{eq22}
b(t\mid a)&=e^{-i\lambda tp}be^{i\lambda tp}=(e^{-i\lambda t}p+I-p)b(e^{i\lambda t}p+I-p)\notag\\
   &=pbp+(I-p)be^{i\lambda t}p+e^{-i\lambda t}pb(I-p)+(I-p)b(I-p)\notag\\
   &=b+2(1-\cos\lambda t)pbp+(e^{-i\lambda t}-1)pb+(e^{i\lambda t}-1)bp
\end{align}
This clearly shows the deviation of $b(t\mid a)$ from $b$. In terms of the norm, a measure of this deviation becomes $\doubleab{b(t\mid a)-b}$. For example, suppose $H=\complex ^2$ is the qubit Hilbert space and
\begin{equation*}
p=\begin{bmatrix}1&0\\0&0\end{bmatrix},\quad b=\begin{bmatrix}b_{11}&b_{12}\\{\bbar}_{12}&b_{22}\end{bmatrix}
\end{equation*}
We then have that
\begin{equation*}
pbp=\begin{bmatrix}b_{11}&0\\0&0\end{bmatrix},\quad pb=\begin{bmatrix}b_{11}&b_{12}\\0&0\end{bmatrix},
   \quad bp=\begin{bmatrix}b_{11}&0\\\bbar _{12}&0\end{bmatrix}
\end{equation*}
and by \eqref{eq22} we obtain
\begin{equation*}
b(t\mid a)-b=\begin{bmatrix}0&(e^{-i\lambda t}-1)b_{12}\\(e^{i\lambda t}-1)\bbar _{12}&0\end{bmatrix}
\end{equation*}
The eigenvalues of $b(t\mid a)-a$ become $\pm\sqrt{2(1-\cos\lambda t)\,}\ab{b_{12}}$ so that
\begin{equation*}
\doubleab{b(t\mid a)-b}=\sqrt{2(1-cos\lambda t)\,}\ab{b_{12}}
\end{equation*}
The maximum deviation is obtained when $t=\pm\frac{(2n-1)\pi}{\lambda}$, $n\in\Natural$, in which case
$\doubleab{b(t\mid a)-b}=2\ab{b_{12}}$. We also have that
\begin{equation*}
b(t\mid a)=\begin{bmatrix}b_{11}&e^{-i\lambda t}b_{12}\\e^{i\lambda t}\bbar _{12}&b_{22}\end{bmatrix}
\end{equation*}
so that
\begin{align*}
\frac{d}{dt}\,b(t\mid a)&=i\sqbrac{b(t\mid a),a}\\
  &=i\lambda\brac{\begin{bmatrix}b_{11}&e^{-i\lambda t}b_{12}\\e^{i\lambda t}\bbar _{12}&b_{22}\end{bmatrix}%
  \begin{bmatrix}1&0\\0&0\end{bmatrix}-\begin{bmatrix}1&0\\0&0\end{bmatrix}%
  \begin{bmatrix}b_{11}&e^{-i\lambda t}b_{12}\\e^{i\lambda t}\bbar _{12}&b_{22}\end{bmatrix}}\\\noalign{\smallskip}
  &=i\lambda\brac{\begin{bmatrix}b_{11}&0\\e^{i\lambda t}\bbar _{12}&0\end{bmatrix}%
  \!-\!\begin{bmatrix}b_{11}&e^{-i\lambda t}b_{12}\\0&0\end{bmatrix}}
  =i\lambda\begin{bmatrix}0&-e^{-i\lambda t}b_{12}\\e^{i\lambda t}\bbar _{12}&0\end{bmatrix}
\end{align*}
The eigenvalues of $\tfrac{d}{dt}b(t\mid a)$ are $\pm\lambda\ab{b_{12}}$ which is independent of time.\hfill\qedsymbol
\end{exam}

\section{Sequential Products}  % Section 3
When we defined the sequential product $a\circ b=a^{1/2}ba^{1/2}$ in Section~1, we assumed that $a$ was measured first but that there was essentially no time delay between the measurements of $a$ and $b$. To incorporate such a time delay, we define the \textit{time dependent sequential product} $a\sqbrac{t}b$ by employing the $a$-evolution of $a\circ b$ so that
\begin{equation*}
a\sqbrac{t}b=(a\circ b)(t\mid a)=e^{-ita}a\circ be^{ita}=a\circ e^{-ita}be^{ita}=a\circ\sqbrac{b(t\mid a)}
\end{equation*}
We interpret $a\sqbrac{t}b$ as the effect that results from first measuring $a$ and then measuring $b$ after a time delay $t$. As in \eqref{eq21} we have the \textit{rate of change} of $a\sqbrac{t}b$ given by the following
\begin{align}                % equation (3.1)
\label{eq31}
\tfrac{d}{dt}\,a\sqbrac{t}b&=-iae^{-ita}a^{1/2}ba^{1/2}e^{ita}+e^{-ita}a^{1/2}ba^{1/2}(ia)e^{ita}\notag\\
   &=-ie^{-ita}a^{3/2}ba^{1/2}e^{ita}+ie^{-ita}a^{1/2}ba^{3/2}e^{ita}\notag\\
   &=i\sqbrac{a\sqbrac{t}b,a}
\end{align}
We say that $a\sqbrac{t}b$ is \textit{constant} if $a\sqbrac{t}b=a\circ b$ for all $t\in (-\infty ,\infty )$. Of course $a\sqbrac{t}b$ is constant if and only if $\tfrac{d}{dt}\,a\sqbrac{t}b=0$ which by \eqref{eq31} is equivalent to $\sqbrac{a\sqbrac{t}b,a}=0$ for all $t\in (-\infty ,\infty )$. We have seen that $b(t\mid a)$ is constant if and only if $\sqbrac{a,b}=0$. It is surprising that the result for $a\sqbrac{t}b$ is slightly different.

\begin{thm}    % Theorem 3.1
\label{thm31}
The following statements are equivalent:\newline
{\rm{(i)}}\enspace $a\sqbrac{t}b$ is constant,
{\rm{(ii)}}\enspace $\sqbrac{a\circ b,a}=0$,
{\rm{(iii)}}\enspace either $\sqbrac{a,b}=0$ or $a=\lambda p$, where $p$ is a projection and $\lambda\in\sqbrac{0,1}$.
\end{thm}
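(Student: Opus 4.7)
The plan is to prove (i)$\Leftrightarrow$(ii) directly from \eqref{eq31}, then (iii)$\Rightarrow$(ii) by a pair of short computations, and finally the nontrivial direction (ii)$\Rightarrow$(iii) by a spectral analysis of $a$. For (i)$\Leftrightarrow$(ii) I would use that $a$ commutes with $e^{\pm ita}$, so $a\sqbrac{t}b=e^{-ita}(a\circ b)e^{ita}$ equals $a\circ b$ identically in $t$ iff $a\circ b$ commutes with $a$; conversely, setting $t=0$ in \eqref{eq31} shows that constancy forces $\sqbrac{a\circ b,a}=0$.

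For (iii)$\Rightarrow$(ii) I would handle the two alternatives separately. If $\sqbrac{a,b}=0$, the spectral theorem writes $a^{1/2}$ as a polynomial in $a$, so $a^{1/2}$ also commutes with $b$, giving $a\circ b=ab$, which commutes with $a$. If $a=\lambda p$ for a projection $p$ and $\lambda\in\sqbrac{0,1}$, then $a^{1/2}=\sqrt{\lambda}\,p$ and $a\circ b=\lambda pbp$; using $p^2=p$ one has $p\cdot pbp=pbp=pbp\cdot p$, so $\sqbrac{a\circ b,a}=\lambda^2\sqbrac{pbp,p}=0$.

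The substantive direction is (ii)$\Rightarrow$(iii). Expanding $\sqbrac{a\circ b,a}=0$ and cancelling the common factor of $a^{1/2}$ from the outside gives the key identity $a^{1/2}\sqbrac{b,a}a^{1/2}=0$. I would then diagonalise $a=\sum_i\lambda_iP_i$ with distinct eigenvalues $\lambda_i\ge 0$ and spectral projections $P_i$. A short computation yields $P_i\sqbrac{b,a}P_j=(\lambda_j-\lambda_i)P_ibP_j$, so the identity above is equivalent to
\begin{equation*}
\sqrt{\lambda_i\lambda_j}\,(\lambda_j-\lambda_i)\,P_ibP_j=0\qquad\text{for all }i,j,
\end{equation*}
which collectively force $P_ibP_j=0$ whenever $i\ne j$ and both $\lambda_i,\lambda_j>0$.

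Recalling that $\sqbrac{a,b}=0$ is itself equivalent to $P_ibP_j=0$ for every $i\ne j$, the conclusion splits on the spectrum of $a$: if $a$ is invertible then every $\lambda_i$ is positive, so $\sqbrac{a,b}=0$ follows immediately; otherwise $a$ has a nontrivial kernel and the remaining freedom is in blocks $P_ibP_j$ with one of $\lambda_i,\lambda_j$ equal to $0$. The main obstacle is this last case: one has to argue that if $\sqbrac{a,b}\ne 0$, the presence of such kernel--range blocks actually prevents $a$ from having more than one distinct positive eigenvalue, so that the positive part of its spectrum collapses to a single $\lambda$ and $a=\lambda p$ with $p$ the projection onto that $\lambda$-eigenspace. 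I expect that step to require the most care, bringing in the positivity of $a$ and $b$ and the structure of $\sqbrac{a,b}$ more delicately than the earlier manipulations.
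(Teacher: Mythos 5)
Your handling of (i)$\Leftrightarrow$(ii) and of (iii)$\Rightarrow$(ii) is correct and essentially the paper's, and your reduction of (ii)$\Rightarrow$(iii) to ``$\sqrt{\lambda_i\lambda_j}\,(\lambda_j-\lambda_i)P_ibP_j=0$ for all $i,j$, hence $P_ibP_j=0$ whenever $i\ne j$ and $\lambda_i\lambda_j>0$'' is also correct (it is the block form of the paper's identity \eqref{eq32}). But the step you defer --- showing that the unconstrained kernel--range blocks force $a$ to have only one positive eigenvalue when $\sqbrac{a,b}\ne 0$ --- is not a delicate point awaiting the right argument: it is false, so the proposal has a genuine, unfillable gap exactly where you flagged it. Take $H=\complex^3$,
\begin{equation*}
a=\begin{bmatrix}1&0&0\\0&1/2&0\\0&0&0\end{bmatrix},\qquad
b=\frac{1}{2}\begin{bmatrix}1&0&1\\0&1&0\\1&0&1\end{bmatrix}\in\escript(\complex^3).
\end{equation*}
Here $b$ has eigenvalues $0,\tfrac12,1$, so it is an effect, and $a^{1/2}=\mathrm{diag}(1,1/\sqrt2,0)$ annihilates the third coordinate, so $a\circ b=a^{1/2}ba^{1/2}=\mathrm{diag}(\tfrac12,\tfrac14,0)$ is diagonal and commutes with $a$: (ii), and hence (i), hold. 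Yet $(ab)_{13}=\tfrac12\ne 0=(ba)_{13}$, so $\sqbrac{a,b}\ne 0$, and $a$ has two distinct nonzero eigenvalues, so it is not $\lambda p$. Thus (iii) fails.

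The comparison with the paper explains what happened. The paper diagonalizes $a=\sum\lambda_ip_i$ with all $\lambda_i\in\parensq{0,1}$, i.e.\ it omits the kernel of $a$ from the spectral sum, and then writes $\sum_{r\ne s}p_sbp_r=p_sb(I-p_s)$, which tacitly assumes $\sum_rp_r=I$, i.e.\ that $a$ is invertible. Your bookkeeping, which retains the eigenvalue $0$, is the more careful one and correctly identifies the blocks $P_ibP_j$ with $\lambda_i\lambda_j=0$ as unconstrained by (ii); the example shows they genuinely can be nonzero. The implication (ii)$\Rightarrow$(iii) is valid only when $a$ is invertible (or trivially when $\dim H\le 2$, where a singular effect is automatically a multiple of a projection). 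To repair your write-up you should either add the invertibility hypothesis --- after which your argument closes immediately, as you note --- or replace (iii) by a condition such as $a^{1/2}\sqbrac{a,b}a^{1/2}=0$, which is what your block computation actually characterizes.
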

\begin{proof}
It follows from \eqref{eq31} that $a\sqbrac{t}b$ is constant if and only if $\sqbrac{a\sqbrac{t}b,a}=0$ for all $t\in (-\infty ,\infty )$. As in \eqref{eq31} this is equivalent to
\begin{equation*}
e^{-ita}a^{3/2}ba^{1/2}e^{ita}=e^{-ita}a^{1/2}ba^{3/2}e^{ita}
\end{equation*}
for all $t\in (-\infty ,\infty )$. Multiplying by $e^{ita}$ on the left and $e^{-ita}$ on the right gives $a^{3/2}ba^{1/2}=a^{1/2}ba^{3/2}$ which is equivalent to (ii). Hence (i) and (ii) are equivalent. We now show that (i) and (iii) are equivalent. If $\sqbrac{a,b}=0$, then clearly (i) holds and if
$a=\lambda p$, then (ii) holds so again (i) holds. Finally, suppose that (i) holds. By the spectral theorem $a=\sum\limits _{i=1}^n\lambda _ip_i$ where $\lambda _i\in\parensq{0,1}$, $\lambda _i\ne\lambda _j$ when $i\ne j$ and $p_i$ are projections. We then have that
\begin{equation*}
a\circ b=\sum _{i=1}^n\lambda _i^{1/2}p_ib\sum _{j=1}^n\lambda _j^{1/2}p_j=\sum _{i,j=1}^n\lambda _i^{1/2}\lambda _j^{1/2}p_ibp_j
\end{equation*}
Hence,
\begin{align*}
a(a\circ b)&=\sum _{k=1}^n\lambda _kp_k\sum _{i,j=1}^n\lambda _i^{1/2}\lambda _j^{1/2}p_ibp_j
  =\sum _{k,j=1}^n\lambda _k^{3/2}\lambda _j^{1/2}p_kbp_j\\
\intertext{and}
(a\circ b)a&=\sum _{i,j=1}^n\lambda _i^{1/2}\lambda _j^{1/2}p_ibp_j\sum _{k=1}^n\lambda _kp_k
  =\sum _{k,i=1}^n\lambda _i^{1/2}\lambda _k^{3/2}p_ibp_k
\end{align*}
Since (ii) holds, multiplying on the right by $p_r$ gives
\begin{equation*}
\sum _{k=1}^n\lambda _k^{3/2}\lambda _r^{1/2}p_kbp_r=\sum _{i=1}^n\lambda _i^{1/2}\lambda _k^{3/2}p_ibp_r
\end{equation*}
Multiplying on the left by $p_s$ we obtain
\begin{equation}             % equation (3.2)
\label{eq32}
\lambda _s^{3/2}\lambda _r^{1/2}p_sbp_r=\lambda _s^{1/2}\lambda _r^{3/2}p_sbp_r
\end{equation}
If $n=1$, then $a=\lambda _1p_1$ so (iii) holds and we are finished. Now suppose that $n\ne 1$. If $s\ne r$ and $p_sbp_r\ne 0$, then by \eqref{eq32} we have that $\lambda _s^{3/2}\lambda _r^{1/2}=\lambda _s^{1/2}\lambda _r^{3/2}$ which gives $\lambda _s=\lambda _r$. But then $s=r$, which is a contradiction. Hence, $p_sbp_r=0$ whenever $s\ne r$. We conclude that
\begin{equation*}
0=\sum _{r\ne s}p_sbp_r=p_sb(I-p_s)=p_sb-p_sbp_s
\end{equation*}
But then $p_sb=p_sbp_s$ and taking adjoints gives $p_sb=bp_s$, $s=1,2,\ldots ,n$. It follows that $\sqbrac{a,b}=0$ so (iii) holds.
\end{proof}

\begin{exam}  % Example 3
We now show directly that if $a=\lambda p$ with $\lambda\in\sqbrac{0,1}$ and $p$ a projection, then $a\sqbrac{t}b$ is constant. As in Example~2, we have that 
\begin{align*}
a\sqbrac{t}b&=e^{-i\lambda tp}(\lambda p)\circ be^{i\lambda tp}=\lambda e^{-i\lambda tp}pbpe^{i\lambda tp}\\
   &=\lambda pbp+2(1-\cos\lambda t)\lambda pbp+(e^{-i\lambda t}-1)\lambda pbp+(e^{i\lambda t}-1)\lambda pbp\\
   &=\lambda pbp=a\circ b\hskip 21pc\square
\end{align*}
\end{exam}

It is clear that if $\sqbrac{a,b}=0$, then $a\sqbrac{t}b=b\sqbrac{t}a$ for all $t\in (-\infty ,\infty )$. Conversely, if $a\sqbrac{t}b=b\sqbrac{t}a$ for all $t\in (-\infty ,\infty )$ then letting $t=0$ gives $a\circ b=b\circ a$. It then follows that $\sqbrac{a,b}=0$ \cite{gg02,gl08}. However, we do not know whether $a\sqbrac{t}b=b\sqbrac{t}a$ for some $t\in (-\infty ,\infty )$ implies that $\sqbrac{a,b}=0$.

The next theorem summarizes some properties of the time evolution.

\begin{thm}    % Theorem 3.2
\label{thm32}
For $a,b,c\in\escript (H)$ the following statements hold.\newline
{\rm{(i)}}\enspace $b(t_1+t_2\mid a)=\sqbrac{b(t_1\mid a)}(t_2\mid a)$.
{\rm{(ii)}}\enspace $a\sqbrac{t_1+t_2}b=\paren{a\sqbrac{t_1}b}(t_2\mid a)$.
{\rm{(iii)}}\enspace $a\sqbrac{t}b\le a$.
{\rm{(iv)}}\enspace If $a+b\le I$, then $(a+b)(t\mid c)=a(t\mid c)+b(t\mid c)$ and $c\sqbrac{t}(a+b)=c\sqbrac{t}a+c\sqbrac{t}b$.
{\rm{(v)}}\enspace $a(t\mid c)\circ b(t\mid c)=a\circ b(t\mid c)$.
{\rm{(vi)}}\enspace $a\sqbrac{t}\paren{b\sqbrac{t}c}=a\circ (e^{-ita}e^{-itb}b\circ ce^{itb}e^{ita})$ and if $\sqbrac{a,b}=0$,
then $a\sqbrac{t}\paren{b\sqbrac{t}c}=e^{-it(a+b)}(a\circ b)\circ ce^{it(a+b)}$.
{\rm{(vii)}}\enspace If $a$ and $b$ coexist, then $a(t\mid c)$, $b(t\mid c)$ coexist and $c\sqbrac{t}a$, $c\sqbrac{t}b$ coexist.
\end{thm}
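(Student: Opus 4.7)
The plan is to dispatch the seven items in order, relying on three basic facts: (a) $e^{\pm ita}$ commutes with every function of $a$, in particular with $a$ and $a^{1/2}$; (b) the map $X\mapsto e^{-itc}Xe^{itc}$ is linear, positive, fixes $I$, and fixes $c$; (c) $X\mapsto c^{1/2}Xc^{1/2}$ is linear. Most items are short direct computations; the only genuinely delicate step lives inside (vi).

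For (i) I would split $e^{-i(t_1+t_2)a}=e^{-it_2a}e^{-it_1a}$ and read the identity off the definition; (ii) is the same split applied to $(a\circ b)(t\mid a)$. For (iii), $b\le I$ gives $a\circ b=a^{1/2}ba^{1/2}\le a$, and conjugation by the unitary $e^{ita}$ preserves this order while fixing $a$, whence $a\sqbrac{t}b\le a$. Part (iv) is pure linearity: both conjugation by $e^{\pm itc}$ and the map $X\mapsto c^{1/2}Xc^{1/2}$ distribute over the sum. For (v), the key observation is that the positive square root of $a(t\mid c)=e^{-itc}ae^{itc}$ is $e^{-itc}a^{1/2}e^{itc}$, since this operator is manifestly positive and squares to $a(t\mid c)$; cancelling the adjacent $e^{itc}e^{-itc}=I$ factors then gives the identity.

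For (vi), substituting $b\sqbrac{t}c=e^{-itb}(b\circ c)e^{itb}$ into $a\sqbrac{t}X=a\circ[X(t\mid a)]$ yields the first formula at once. In the commuting case $\sqbrac{a,b}=0$, I would use $e^{-ita}e^{-itb}=e^{-it(a+b)}$, then appeal to functional calculus to conclude that $a^{1/2}$ commutes with $b$, with $b^{1/2}$, and with $e^{\pm it(a+b)}$, so that $a^{1/2}$ can be pulled past these exponentials. One then rewrites $(a^{1/2}b^{1/2})c(b^{1/2}a^{1/2})$ as $(a\circ b)\circ c$, using that for commuting positive operators $a^{1/2}b^{1/2}=(ab)^{1/2}=(a\circ b)^{1/2}$. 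I expect this commutativity bookkeeping to be the main obstacle, because one has to keep careful track of where square roots and exponentials sit and justify each pass-through via functional calculus.

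Finally, (vii) reduces to (iii) and (iv). Given a coexistence witness $a=a_1+d$, $b=b_1+d$ with $a_1+b_1+d\le I$, the triple $\paren{a_1(t\mid c),b_1(t\mid c),d(t\mid c)}$ witnesses coexistence of $a(t\mid c)$ and $b(t\mid c)$: the inequality $a_1+b_1+d\le I$ is preserved by the positive unital map $X\mapsto X(t\mid c)$, and the additive decompositions of $a$ and $b$ carry over by (iv). Similarly $\paren{c\sqbrac{t}a_1,c\sqbrac{t}b_1,c\sqbrac{t}d}$ witnesses coexistence of $c\sqbrac{t}a$ and $c\sqbrac{t}b$, using (iv) for additivity and (iii) for the bound $c\sqbrac{t}(a_1+b_1+d)\le c\le I$.
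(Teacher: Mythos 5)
Your proposal is correct and follows essentially the same route as the paper's proof: the same exponential splitting for (i)--(ii), order preservation under unitary conjugation for (iii), linearity for (iv), the identification of $e^{-itc}a^{1/2}e^{itc}$ as the square root of $a(t\mid c)$ for (v), the functional-calculus commutation of $a^{1/2}$ with $e^{\pm it(a+b)}$ and the identity $(ab)^{1/2}=a^{1/2}b^{1/2}$ for (vi), and transport of the coexistence witness via (iii) and (iv) for (vii). No gaps.
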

\begin{proof}
(i)\enspace For all $t_1,t_2\in (-\infty ,\infty )$ we have that
\begin{align*}
b(t_1+t_2\mid a)&=e^{-i(t_1+t_2)a}be^{i(t_1+t_2)a}=e^{-it_2a}e^{-it_1a}be^{it_1a}e^{it_2a}\\
   &=e^{-it_2a}b(t_1\mid a)e^{it_2a}=\sqbrac{b(t_1\mid a)}(t_2\mid a)
\end{align*}
(ii)\enspace For all $t_1,t_2\in (-\infty ,\infty )$ we have that
\begin{align*}
a\sqbrac{t_1+t_2}b&=e^{-i(t_1+t_2)a}a\circ be^{i(t_1+t_2)a}=e^{-it_2a}e^{-it_1a}a\circ be^{it_1a}e^{it_2a}\\
   &=e^{-it_2a}a\sqbrac{t_1}be^{it_2a}=\paren{a\sqbrac{t_1}b}(t_2\mid a)
\end{align*}
(iii)\enspace Since $a\circ b\le a$, we have that
\begin{equation*}
a\sqbrac{t}b=e^{-ita}a\circ be^{ita}\le e^{-ita}ae^{ita}=a
\end{equation*}
(iv)\enspace If $a+b\le I$, then $a+b\in\escript (H)$ and we obtain
\begin{align*}
(a+b)(t\mid c)&=e^{-itc}(a+b)e^{itc}=e^{-itc}ae^{itc}+e^{-itc}be^{itc}=a(t\mid c)+b(t\mid c)\\
  c\sqbrac{t}(a+b)&=e^{-itc}c\circ (a+b)e^{itc}=e^{-itc}c\circ ae^{itc}+e^{-itc}c\circ be^{itc}\\
  &=c\sqbrac{t}a+c\sqbrac{t}b
\end{align*}
(v)\enspace For all $t\in (-\infty ,\infty )$ we have that
\begin{align*}
a(t\mid c)\circ b(t\mid c)&=(e^{-itc}ae^{itc})\circ (e^{-itc}be^{itc})\\
   &=e^{-itc}a^{1/2}e^{itc}e^{-itc}be^{itc}e^{-itc}a^{1/2}e^{itc}\\
   &=e^{-itc}a\circ be^{itc}=(a\circ b)(t\mid c)
\end{align*}
(vi)\enspace For all $t\in (-\infty ,\infty )$ we obtain
\begin{align*}
a\sqbrac{t}\paren{b\sqbrac{t}c}&=e^{-ita}a\circ\paren{b\sqbrac{t}c}e^{ita}=e^{-ita}a\circ (e^{-itb}b\circ ce^{itb})e^{ita}\\
   &=a\circ (e^{-ita}e^{-itb}b\circ ce^{itb}e^{ita})
\end{align*}
and if $\sqbrac{a,b}=0$, then
\begin{equation*}
a\sqbrac{t}\paren{b\sqbrac{t}c}=e^{-it(a+b)}a\circ (b\circ c)e^{it(a+b)}=e^{-it(a+b)}(a\circ b)\circ ce^{it(a+b)}
\end{equation*}
(vii)\enspace Since $a$ and $b$ coexist, there exist effects $a_1,b_1,d\in\escript (H)$ such that $a_1+b_1+d\le I$ and $a=a_1+d$, $b=b_1+d$. But then
\begin{equation*}
a_1(t\mid c)+b_1(t\mid c)+d(t\mid c)=e^{-itc}a_1e^{itc}+e^{-itc}b_1e^{itc}+e^{-itc}de^{itc}\le I
\end{equation*}
and by (iv)
\begin{align*}
a(t\mid c)&=a_1(t\mid c)+d(t\mid c)\\
b(t\mid c)&=b_1(t\mid c)+d(t\mid c)
\end{align*}
Hence, $a(t\mid c)$ and $b(t\mid c)$ coexist. Moreover, we have that
\begin{equation*}
c\sqbrac{t}a_1+c\sqbrac{t}b_1+c\sqbrac{t}d=e^{-itc}c\circ (a_1+b_1+d)e^{itc}\le I
\end{equation*}
and by (iv)
\begin{align*}
c\sqbrac{t}a&=c\sqbrac{t}a_1+c\sqbrac{t}d\\
c\sqbrac{t}b&=c\sqbrac{t}b_1+c\sqbrac{t}d
\end{align*}
Hence, $c\sqbrac{t}a$ and $c\sqbrac{t}b$ coexist.
\end{proof}

We can extend the concepts of time evolution and time dependent sequential products of effects to observables. If $a\in\escript (H)$ and
$B=\brac{B_y\colon y\in\Omega _B}$ is an observable on $H$ the $a$-\textit{evolution} of $B$ is the observable $B(t\mid a)$ given by
$\Omega _{B(t\mid a)}=\Omega _B$ for all $t\in (-\infty ,\infty )$ and
\begin{equation*}
B(t\mid a)_y=e^{-ita}B_ye^{ita}
\end{equation*}
for all $y\in\Omega _B$. For all $\rho\in\sscript (H)$, the distribution of $B(t\mid a)$ becomes
\begin{equation*}
\Phi _\rho ^{B(t\mid a)}(y)=\rmtr\sqbrac{\rho B(t\mid a)_y}=\rmtr (\rho e^{-ita}B_ye^{ita})=\rmtr (e^{ita}\rho e^{-ita}B_y)
\end{equation*}
For observables $A=\brac{A_x\colon x\in\Omega _A}$, $B=\brac{B_y\colon y\in\Omega _B}$, the \textit{time dependent sequential product} is the observable $A\sqbrac{t}B$ given by $\Omega _{A\sqbrac{t}B}=\Omega _A\times\Omega _B$ and 
\begin{align*}
A\sqbrac{t}B_{x,y)}&=e^{-itA_x}A\circ B_{(x,y)}e^{itA_x}=A_x^{1/2}e^{-itA_x}B_ye^{itA_x}A_x^{1/2}\\
   &=A_x\circ B(t\mid A_x)_y
\end{align*}
Notice that $A\sqbrac{t}B$ is indeed an observable for all $t\in (-\infty ,\infty )$ because
\begin{align*}
\sum _{x,y}A\sqbrac{t}B_{(x,y)}&=\sum _{x,y}A_x^{1/2}e^{-itA_X}B_ye^{itA_x}A_x^{1/2}\\
   &=\sum _xA_x^{1/2}e^{-itA_x}\sum _yB_ye^{itA_x}A_x^{1/2}\\
   &=\sum _xA_x^{1/2}e^{-itA_x}Ie^{itA_x}A_x^{1/2}=\sum _xA_x=I
\end{align*}
The distribution of $A\sqbrac{t}B$ in the state $\rho$ becomes
\begin{equation*}
\Phi _\rho ^{A\sqbrac{t}B}(x,y)=\rmtr (e^{itA_x}\rho e^{-itA_x}A_x\circ B_y)=\rmtr\sqbrac{e^{itA_x}\rho e^{-itA_x}A\circ B_{(x,y)}}
\end{equation*}
Finally, the \textit{time dependent conditional observable} $(B\mid A)(t\mid A)$ is given by $\Omega _{(B\mid A)(t\mid A)}=\Omega _B$ and
\begin{equation*}
(B\mid A)(t\mid A)_y=\sum _xA_x\sqbrac{t}B_y=\sum _xA_x\circ B(t\mid A_x)_y
\end{equation*}
The distribution of $(B\mid A)(t\mid A)$ in the state $\rho$ becomes
\begin{equation*}
\Phi _\rho ^{(B\mid A)(t\mid A)}(y)=\sum _x(e^{itA_x}\rho e^{-itA_x}A_x\circ B_y)
\end{equation*}

Straightforward generalizations of Theorems~\ref{thm31} and \ref{thm32} hold for observables. For example, consider the second part of
Theorem~\ref{thm32}(iv). If $B_i$, $i=1,2,\ldots ,n$, are observables on $H$ with outcome sets $\Omega _{B_i}=\Omega$, $i=1,2,\ldots ,n$, and $\lambda _i\in\sqbrac{0,1}$ satisfy $\sum\limits _{i=1}^n\lambda _i=1$, then the \textit{convex combination}
$\sum\limits _{i=1}^n\lambda _iB_i$ is the observable with outcome space $\Omega$ given by
$\paren{\sum\limits _{i=1}^n\lambda _iB_i}_y=\sum\limits _{i=1}^n\lambda _i(B_i)_y$. If $A$ is another observable on $H$, then we obtain
\begin{align*}
A\sqbrac{t}\paren{\sum\lambda _iB_i}_{(x,y)}&=e^{-itA_x}\paren{A\circ\sum\lambda _iB_i}_{(x,y)}e^{itA_x}\\
  &=e^{-itA_x}A_x\circ\sum\lambda _iB_{iy}e^{itA_x}=e^{-itA_x}\sum\lambda _iA_x\circ B_{iy}e^{itA_x}\\
  &=\sum\lambda _ie^{-itA_x}A_x\circ B_{iy}e^{-iA_x}=\sum\lambda _i(A\sqbrac{t}B_i)_{(x,y)}
\end{align*}
It follows that $A\sqbrac{t}\paren{\sum\lambda _iB_i}=\sum\lambda _iA\sqbrac{t}B_i$.

\end{document}